\newlist{inparaenum}{enumerate}{2}% allow two levels of nesting in an enumerate-like environment
\setlist[inparaenum]{nosep}% compact spacing for all nesting levels
\setlist[inparaenum,1]{label=\bfseries\arabic*.}% labels for top level
\setlist[inparaenum,2]{label=\arabic{inparaenumi}\emph{\alph*})}% labels for second level
\def\mbb{\mathbb}
\def\mcal{\mathcal}
\def\dSP0{\delta_{SP0}}
\def\dRT0{\delta_{RT0}}
\def\dPS1{\delta_{PS1}}
\def\dTR1{\delta_{TR1}}
\def\half{\frac{1}{2}}
\def\prt{\partial}
\def\dgdx{\frac{\partial g_1}{\partial x}}
\def\dgdn{\frac{\partial g_1}{\partial n}}
\def\int{ {\text{int}} }
\def\balph{\bar{\alpha}}
\def \be {\begin{equation}}
\def \ee {\end{equation}}
\def \ba {\begin{aligned}}
\def \ea {\end{aligned}}
\theoremstyle{plain}% default
\newtheorem{theorem}{Theorem}[section]
\newtheorem{proposition}{Proposition}[section]
\newtheorem{assumption}{Assumption}
\def\bs{\boldsymbol}
\normalsize\title{\LARGE \bf
	Two competing populations with a common environmental resource
	\thanks{ }}
\author{
	Keith Paarporn, James Nelson
	\thanks{K. Paarporn and J. Nelson are with the Department of Computer Science, University of Colorado, Colorado Springs. Contact: \texttt{ \{kpaarpor,jnelso22\}@uccs.edu}. This work is supported in part by the UCCS Committee on Research and Creative Works, and NSF grant  \#ECCS-2346791.
	}
}
\begin{document}
\thispagestyle{plain}
\pagestyle{plain}

\maketitle

\begin{abstract}
 	Feedback-evolving games is a framework that models the co-evolution between payoff functions and an environmental state. It serves as a useful tool to analyze many social dilemmas such as natural resource consumption, behaviors in epidemics, and the evolution of biological populations. However, it has primarily focused on the dynamics of a single population of agents. In this paper, we consider the impact of two populations of agents that share a common environmental resource. We focus on a scenario where individuals in one population are governed by an environmentally ``responsible" incentive policy, and individuals in the other population are environmentally ``irresponsible". An analysis on the asymptotic stability of the coupled system is provided, and conditions for which the resource collapses are identified. We then derive consumption rates for the irresponsible population that optimally exploit the environmental resource, and analyze how incentives should be allocated to the responsible population that most effectively promote the environment via a sensitivity analysis.

\end{abstract}

%%%%%%%%%%%%%%%%%%%%%%%%%%%%%%%%%%%%%%%%%%%%%%%%%%%%%%%
\section{Introduction}

Many game-theoretic analyses describe strategic interactions between individuals provided that the incentives for their choices are static, i.e. do not change over time. However, choices often have an impact on a shared environment, which in turn affect their incentives for future choices. The utilization of common resources such as water, fishing grounds, or traffic networks best illustrates this -- high individual utilization makes fewer resources available to others in the future \cite{ostrom1990governing}. 

%Clearly, a dynamic interplay exists between users' decisions and their impact on a shared environmental state.

The recent framework of \emph{feedback-evolving games} addresses this interplay by incorporating a changing environmental state coupled with existing evolutionary game theoretic dynamics \cite{weitz2016oscillating,tilman2020evolutionary,gong2022limit}. The core model considers payoff functions to individuals that depend on whether the environment is in an abundant or depleted state. It is primarily used to understand how the payoff functions should be structured in order to avoid a ``tragedy of the commons", an outcome in which the environment ultimately becomes depleted. In other words, these payoff functions reflect environmental incentive policies that can be designed to manage the common resource \cite{chen2018punishment,paarporn2018optimal,wang2020steering}.

Much of the existing research focuses on a \emph{single population} of agents whose decisions impact the environment in isolation \cite{satapathi2023coupled,frieswijk2023modeling,khazaei2021disease,wang2020steering,arefin2021imitation,stella2021impact,stella2022lower}. These models are not sufficient to describe multi-population interactions. For example, individuals in neighboring countries follow different environmental policies, yet utilize resources from the same common source (e.g. fish in the ocean, water from rivers, clean air). Extensions of feedback-evolving games to multiple populations, and possibly multiple local environments, would enable a vastly richer set of scenarios for study. Recent works in the literature have shifted attention to these multi-population scenarios \cite{kawano2018evolutionary,govaert2022population,certorio2023epidemic}. For instance, \cite{certorio2023epidemic} studies epidemic spreading within and between two populations, where the behaviors in each population have externalities on the health state of the other. Overall, an interesting and understudied direction involves hierarchical decision-making, i.e. the strategic selection of local incentive policies given there are environmental externalities between the populations.

In this paper, we extend the framework in this direction by considering two populations that share the same local environmental resource. We focus on a particular setting where one population, labelled ``responsible", implements a pro-environmental incentive policy such that the common resource can be sustained in the absence of other populations. The other population, labelled ``irresponsible", does not restrain its consumption activity. Our study centers on the following question, stated informally as
\begin{center}
	\emph{To what extent can the irresponsible population exploit the environmental resource?}
\end{center}
While too much consumption could cause the resource to collapse, too little consumption leaves missed opportunities. Our highlighted contributions are:
\begin{itemize}
 	\item A stability analysis for the dynamics of the two-population system.
	\item Identify conditions for which the environment collapses, and when it can be sustained.
	\item Derivation of the optimal consumption rates for the irresponsible population.
	\item A sensitivity analysis for the incentive policies chosen by the responsible population.
\end{itemize}
The last item above illustrates how incentives should be allocated in order to maximally promote the environmental resource. Interestingly, we find that incentivizing mutual cooperation (i.e. coordinated) is far more effective in promoting resource levels than incentivizing unilateral cooperation (anti-coordinated). 

%subsidizing unilateral cooperation

The paper is organized as follows. Section \ref{sec:model} provides relevant background on single population feedback-evolving games, before presenting the two-population model. Section \ref{sec:dynamics} states our main result regarding the dynamics of the model. Section \ref{sec:utility} proposes and solves an optimization problem regarding the irresponsible population's choice of consumption levels. Simulations and sensitivity analyses are given in Section \ref{sec:sensitivity}, followed by concluding remarks.

%%%%%%%%%%%%%%%%%%%%%%%%%%%%%%%%%%%%%%%%%%%%%%%%%%%%%%%
\section{Model}\label{sec:model}

Before describing our two-population model, preliminary background on single-population feedback-evolving games is provided.

\subsection{Preliminary on Feedback-evolving games}

A single-population feedback-evolving game describes a population of agents that have access to a degradable environmental resource. The relative abundance of the resource is denoted by $n \in [0,1]$. At any given time, each agent is choosing whether to use a low consumption action (strategy $\mcal{L}$), or a high consumption action (strategy $\mcal{H}$). High consumption degrades $n$, and low consumption improves $n$. The immediate payoff experienced by an agent is described by the environment-dependent $2\times 2$ payoff matrix,
\begin{equation}
	A_n = n\begin{bmatrix} R_1 & S_1 \\ T_1 & P_1 \end{bmatrix} + (1-n)\begin{bmatrix} R_0 & S_0 \\ T_0 & P_0 \end{bmatrix}
\end{equation}
Here, the first row and column corresponds to a low consumer, and the second row and column corresponds to a high consumer. Entry $ij$ ($i,j \in \{\mcal{L},\mcal{H}\}$) indicates the experienced payoff to an agent using strategy $i$ when encountering an agent using strategy $j$. We denote $x \in [0,1]$ as the fraction (or frequency) of agents in the population using strategy $\mcal{L}$. The payoff experienced by each type of agent is then given by
\be
	\pi_\mcal{L}(x,n) =  [A_n [x,1-x]^\top ]_1, \quad \pi_\mcal{H}(x,n) = [A_n [x,1-x]^\top ]_2
\ee
The payoffs are determined by the parameters in the $A_0$ and $A_1$ matrices. The $A_1$ matrix is the payoff matrix when the environment is abundant. Following the literature on feedback-evolving games, we make the following assumption about the $A_1$ matrix. 
\begin{figure}[t]
	\centering
	\includegraphics[scale=.20]{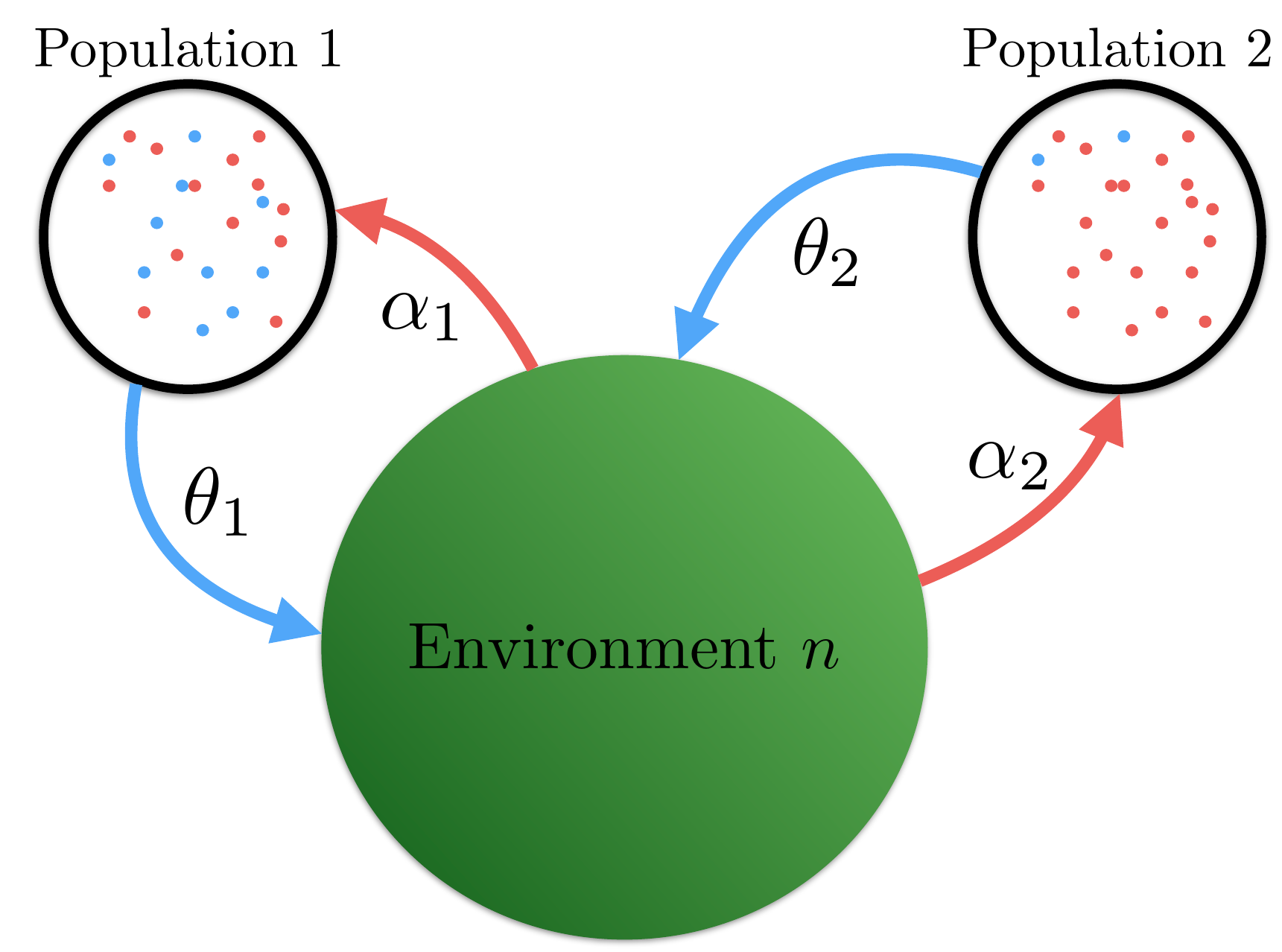}
	\caption{Diagram of our two-population model. Activities from both populations impact the shared environmental state, $n$. The rates $\theta_i$, $\alpha_i$ denote the restoration and degradation rates from the activities of population $i$, respectively.}
	\label{fig:main}
\end{figure}
\begin{assumption}\label{assume:PD}
	High consumption is the dominant strategy in $A_1$, i.e. $\dTR1 := T_1 - R_1 > 0$ and $\dPS1 := P_1 - S_1 > 0$
\end{assumption}
On the other hand, the $A_0$ matrix describes payoffs when the environment is depleted. We interpret $A_0$ to be an ``\emph{environmental policy}" that the population follows. For example, the low consumption strategy may be more incentivized when the environment is bad (e.g. subsidies for using electric vehicles). The payoff structure of the $A_0$ matrix is completely determined from the parameters $\dSP0 := S_0 - P_0$ and $\dRT0 := R_0 - T_0$.

We will use the replicator equation to describe how agents revise their decisions over time. Moreover, the environment $n$ evolves over time, as it is influenced by the decisions in the population. The overall system dynamics is given by two coupled ODEs:
\begin{equation}\label{eq:1pop}
	\begin{aligned}
		\dot{x} &= x(1-x)g(x,n) \\
		\dot{n} &= \epsilon n(1-n)(\theta x - \alpha(1-x))
	\end{aligned}
\end{equation}
where
\be
	g(x,n) := \pi_{\mcal{L}}(x,n) - \pi_{\mcal{H}}(x,n)
\ee
is the payoff difference between low and high consumers, $\theta > 0$ is the restoration rate from low consumption activity, $\alpha > 0$ is the degradation rate from high consumption activity, and $\epsilon > 0$ is a time-scale separation constant. The form of the $\dot{n}$ equation is often referred to as the \emph{tipping point} dynamics, since $n$ is increasing only if there are sufficiently high fraction of low consumers.

We consider initial conditions in the interior $(x,n) \in (0,1)^2$, which is forward-invariant under \eqref{eq:1pop}. Any initial condition on an edge will stay on that edge. The originating work \cite{weitz2016oscillating} provided a full characterization of the asymptotic outcomes of system \eqref{eq:1pop} for all possible environmental policy matrices $A_0$. 

The result below summarizes the variety of behaviors that system \eqref{eq:1pop} can exhibit.

\begin{theorem}[adapted from \cite{weitz2016oscillating}]\label{thm:PNAS}
	The environmental policy $(\dSP0,\dRT0)$ determines the asymptotic properties of \eqref{eq:1pop} as follows.
	
	\vspace{1mm}
	
	\noindent 1) \emph{Sustained resource:} If  $(\dSP0,\dRT0) \in \mcal{V}$, where
	\be
		\mcal{V} := \{ (y_1,y_2) \in \mbb{R}^2 : y_1 > 0 \text{ and } -\frac{\theta}{\alpha} y_1 < y_2 < \frac{\dTR1}{\dPS1} y_1 \},
	\ee
	then the fixed point $(x^*,n^*) = (\frac{\alpha}{\alpha + \theta}, \frac{g(x^*,0)}{-\prt g/\prt n(x^*)}) \in (0,1)^2$ is the only asymptotically stable fixed point in the system.
%	\be\label{eq:g_coeffs}
%	\ba
%	a &:= \dSP0 - \dRT0 + \dPS1 - \dTR1 \\
%	b &:= \dRT0 - \dSP0 \\
%	c &:= -(\dPS1 + \dSP0) \\
%	d &:= \dSP0.
%	\ea
%	\ee
	\vspace{1mm}
	
	\noindent 2) \emph{Oscillating Tragedy of the commons (OTOC):} If $\dSP0 > 0$ and $\frac{\dTR1}{\dPS1} \dSP0 < \dRT0$, then system \eqref{eq:1pop} exhibits a stable heteroclinic cycle between the four corner fixed points, $(0,0)$, $(1,0)$, $(1,1)$, $(0,1)$.
	
	\vspace{1mm}
	
	\noindent 3) \emph{Tragedy of the commons (TOC):} If  $\dSP0 \leq 0$, or  $\dSP0 > 0$ and $\dRT0 < -\frac{\theta}{\alpha} \dSP0$, then the only asymptotically stable fixed point has $n = 0$.
\end{theorem}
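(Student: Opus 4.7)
The plan is to handle the three cases through a combined local-and-global analysis of the planar system \eqref{eq:1pop} on the invariant square $[0,1]^2$. Since $\dot x$ contains the factor $x(1-x)$ and $\dot n$ contains $n(1-n)$, the four corners are always equilibria, and any interior equilibrium must simultaneously satisfy $\theta x - \alpha(1-x)=0$ and $g(x,n)=0$. The first equation forces $x^* = \alpha/(\alpha+\theta)$, and since $A_n$ is affine in $n$ the function $g(x^*,\cdot)$ is linear in $n$, with unique root $n^* = g(x^*,0)/[-\prt g/\prt n(x^*)]$. Expanding $g$ and $\prt g/\prt n$ explicitly in $(\dSP0,\dRT0,\dPS1,\dTR1)$, the requirement $n^* \in (0,1)$ translates into the two-sided inequality defining $\mcal{V}$ for case~1, and delineates when no interior equilibrium exists in cases~2 and~3.

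Next I would classify the local behavior at each equilibrium. At the interior point, the identities $g(x^*,n^*)=0$ and $\theta x^* - \alpha(1-x^*)=0$ make the $(2,2)$ Jacobian entry vanish; the trace equals $x^*(1-x^*)\,\prt g/\prt x(x^*,n^*)$ and the determinant equals $-\epsilon\, x^*(1-x^*)\,n^*(1-n^*)(\theta+\alpha)\,\prt g/\prt n(x^*,n^*)$. Substituting the explicit expression for $n^*$ turns the upper bound $\dRT0 < (\dTR1/\dPS1)\dSP0$ of $\mcal{V}$ into $\prt g/\prt x(x^*,n^*) < 0$, and combined with Assumption~\ref{assume:PD} and the lower bound $\dRT0 > -(\theta/\alpha)\dSP0$, it yields $\prt g/\prt n(x^*,n^*) < 0$. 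Hence the trace is negative and the determinant is positive, giving local asymptotic stability. At each corner the linearization decouples along the edges: on $\{n=0\}$ the $x$-dynamics are pure replicator with matrix $A_0$, on $\{n=1\}$ with $A_1$, and the vertical edges are governed by the sign of $\theta x-\alpha(1-x)$. Assumption~\ref{assume:PD} together with the signs of $\dSP0$ and $\dRT0$ then classifies each corner as a saddle, source, or sink.

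Finally I would assemble the global picture. In case~1, each corner has at least one unstable edge direction, so the stable interior point is the only asymptotically stable equilibrium. In case~3, no asymptotically stable interior point exists, and the edge dynamics on $n=0$ together with $\dot n<0$ above the $n$-nullcline drive trajectories toward whichever of $(0,0)$ or $(1,0)$ is the $A_0$ sink. In case~2, all four corners are saddles, and the one-dimensional unstable manifold of each lies on an edge that flows into the adjacent corner, assembling the heteroclinic cycle $(0,0)\!\to\!(1,0)\!\to\!(1,1)\!\to\!(0,1)\!\to\!(0,0)$.

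The main obstacle is proving \emph{stability} of the heteroclinic cycle in case~2. This requires the standard ``saddle-value'' estimate: one computes the ratio of stable to unstable eigenvalue magnitudes at each corner and verifies that their product around the cycle exceeds one, under the hypothesis $\dSP0>0$ and $\dRT0 > (\dTR1/\dPS1)\dSP0$. Once this contraction estimate is obtained from the explicit corner eigenvalues identified in the previous step, attractivity of heteroclinic cycles in planar systems follows from standard arguments, and the remaining algebraic checks in cases~1 and~3 are routine.
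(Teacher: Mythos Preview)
The paper does not prove Theorem~\ref{thm:PNAS}; it is stated as a background result adapted from \cite{weitz2016oscillating} and no argument is given in the body or appendix. So there is no ``paper's own proof'' against which to compare your proposal. That said, your approach is sound and is precisely the methodology the paper uses in Appendix~\ref{sec:app_dynamics} for its own Theorem~\ref{thm:stability}: write $g$ bilinearly, locate all equilibria, and read off stability from the Jacobian (trace/determinant at the interior point, diagonal entries at corners). Your identification of the interior equilibrium, the trace $x^*(1-x^*)\,\prt g/\prt x(x^*,n^*)$, the determinant $-\epsilon\,x^*(1-x^*)n^*(1-n^*)(\theta+\alpha)\,\prt g/\prt n(x^*)$, and the translation of the $\mcal{V}$ inequalities into the required sign conditions all match the computations the paper carries out (in the two-population setting) line for line.

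One genuine gap: in case~1 you only treat the four corners and the interior point, but when $\dRT0<0$ (which is allowed in $\mcal{V}$) there is also an edge equilibrium on $\{n=0\}$ at $x_e=\dSP0/(\dSP0-\dRT0)\in(0,1)$; this is exactly the point $\bs{z}_\text{B}$ in the paper's Table~\ref{table:FPs}. You must check that its $n$-direction eigenvalue $\epsilon((\theta+\alpha)x_e-\alpha)$ is positive, which follows from $\dRT0>-\tfrac{\theta}{\alpha}\dSP0$ (the lower boundary of $\mcal{V}$). A second, softer issue: for case~2 the saddle-value product criterion you invoke is indeed the standard tool, but you should note that the relevant product here simplifies to $\frac{\dRT0\,\dPS1}{\dSP0\,\dTR1}>1$, which is exactly the hypothesis $\dRT0>\tfrac{\dTR1}{\dPS1}\dSP0$; without writing this out the ``contraction estimate'' remains an assertion rather than a computation.
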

The  policies belonging to $\mcal{V}$ are ones that can sustain a stable and non-zero resource level. In other words, they avert a TOC, which we refer to as a fixed point where the resource level is zero. The policies in $\mcal{V}$ can be considered as more desirable operating conditions than the policies described in items 2 and 3 above. The OTOC is considered an undesirable outcome, as the system cycles between periods of nearly zero and fully abundant resource levels. Lastly, we note that the time-scale separation $\epsilon$ does not affect the derived stability properties and fixed points of the system.

%Visually, the state trajectory ``spirals outward" from the interior, and its $\omega$-limit set is the boundary of the state space.

%%%%%%%%%%%%%%%%%%%%%%%%%%%%%%%%%%%%%%%%%%%%%%%%%%%%%%%

\subsection{Model: Two-population feedback-evolving games}

Now, we consider two populations that share the environmental resource. Figure \ref{fig:main} illustrates this scenario. Specifically, the consumption decisions of the members of both populations have effects on the environmental resource. Now, $x_i \in [0,1]$ denotes the fraction of low consumers in population $i \in \{1,2\}$. The behavior of population $i$ is governed by the payoff matrices $A_0^{(i)}$ (parameters $\dSP0^i$, $\dRT0^i$) and $A_1^{(i)}$ (parameters $\dTR1^i$, $\dPS1^i$), restoration rate $\theta_i > 0$, degradation rate $\alpha_i > 0$. In line with Assumption \ref{assume:PD}, we maintain the parameters in the abundant state satisfy $\dTR1^i, \dPS1^i > 0$. The payoff differences for each population are given by
\be
	g_i(x_i,n) := \pi_\mcal{L}^{(i)}(x_i,n) - \pi_\mcal{H}^{(i)}(x_i,n)
\ee
where $\pi_j^{(i)}$ is the experienced payoff to a $j$-strategist in population $i$. The system dynamics are now given by the set of three ODEs with state variable $\bs{z} = (x_1,x_2,n)$:

%We will extend the notation of \eqref{eq:g_coeffs} to define variables $a_i$, $b_i$, $c_i$, and $d_i$ with population-dependent parameters (e.g. $b_i = \dRT0^i - \dSP0^i$, etc.), and denote 
%\be
%	g_i(x_i,n) := a_i x_i n + b_i x_i + c_i n + d_i.
%\ee
%as the payoff difference between low and high consumers in population $i$. T

%
\begin{equation}\label{eq:2pop}
	\begin{aligned}
		\dot{x}_1 &= x_1(1-x_1)g_1(x_1,n) \\
		\dot{x}_2 &= x_2(1-x_2)g_2(x_2,n) \\
		\dot{n} &= \epsilon n(1-n)h(x_1,x_2)
	\end{aligned}
\end{equation}
where $h(x_1,x_2) := \sum_{i=1}^2 (\theta_i x_i - \alpha_i (1-x_i) )$, and with initial condition $(x_1(0), x_2(0), n(0)) \in (0,1)^3$. By construction, $(0,1)^3$ is forward-invariant. We are interested in studying how the dynamics of the two-population environmental coupled system \eqref{eq:2pop} behaves with respect to the population's defining parameters -- especially how the environmental policies $A_0^{(1)}$, $A_0^{(2)}$ adopted by each population affects environmental resources. We will make the following assumptions on these environmental policies.
\begin{assumption}\label{assume:dgdn}
	We assume that $\frac{\prt g_i}{\prt n}(x_i) < 0$ for all $x_i \in [0,1]$, $i = 1,2$. This is equivalent to $\dSP0^i > -\dPS1^i$ and $\dRT0^i > -\dTR1^i$.
\end{assumption}
Assumption \ref{assume:dgdn} asserts that the relative payoff to low consumers in both populations monotonically decreases as the environmental state improves. The next and final assumption focuses our study on a scenario where one population 1 is environmentally \emph{responsible}, and population 2 is \emph{irresponsible}.
\begin{assumption}\label{assume:responsible}
	For population 1, we assume that $(\dSP0^1,\dRT0^1) \in \mcal{V}$ (with $\theta = \theta_1$ and $\alpha = \alpha_1$). For population 2, we assume that $\dSP0^2, \dRT0^2 < 0$.
\end{assumption}
The first part of Assumption \ref{assume:responsible} asserts that agents in population 1 are cooperative enough to sustain the resource (Theorem \ref{thm:PNAS}, item 1) in the absence of population 2. The second part asserts that agents in population 2 make no effort to conserve resources,  always preferring the high consumption strategy.  The set of all feasible policies specified by Assumptions  \ref{assume:dgdn} and \ref{assume:responsible}  is visually depicted in Figure \ref{fig:assumption}.

\begin{figure}[t]
	\centering
	\includegraphics[scale=.26]{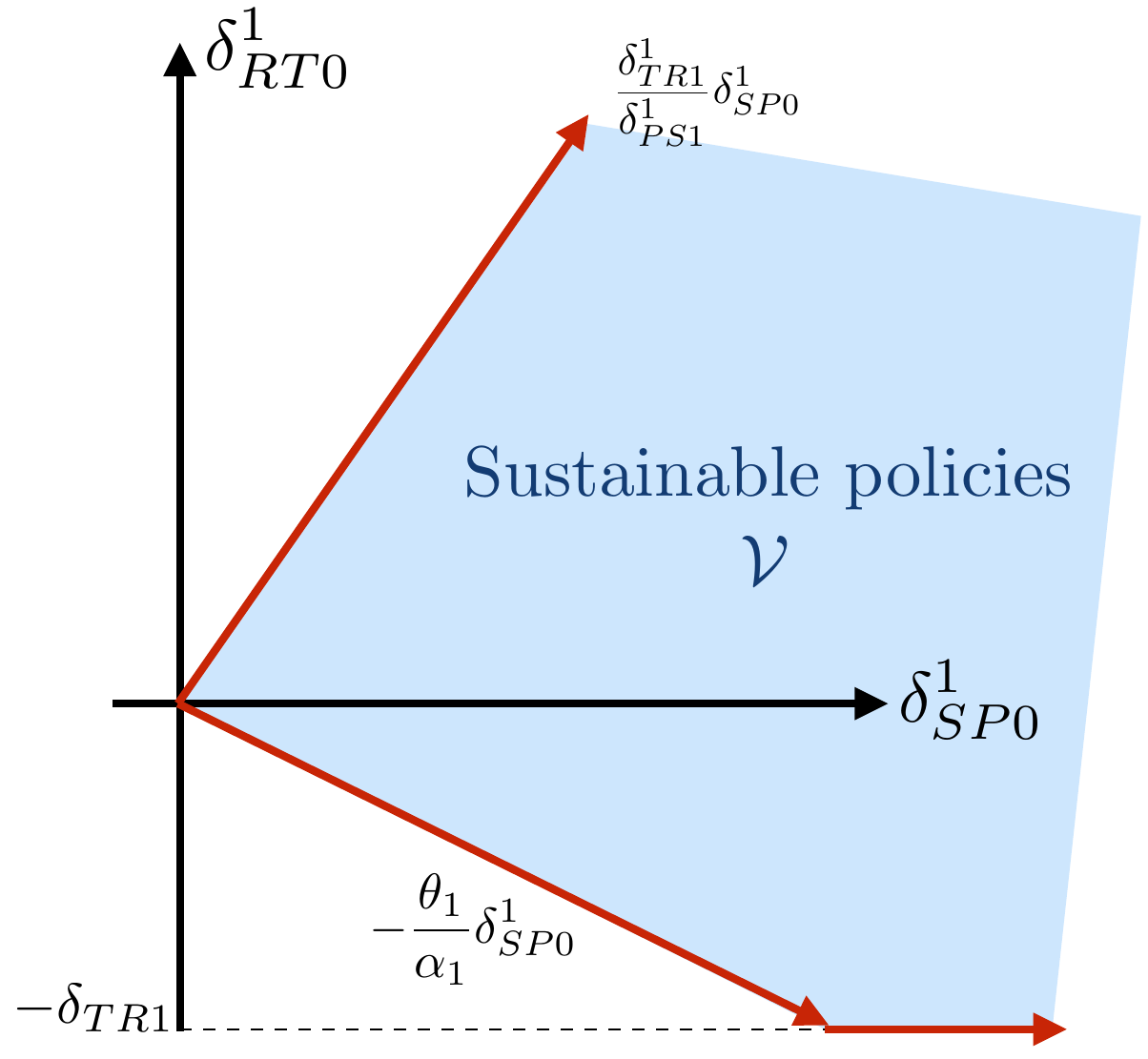}
	\caption{The set of all sustainable policies for population 1 is shown as the blue region. A sustainable policy maintains a stable, nonzero resource in the absence of other populations (Theorem \ref{thm:PNAS}). This region is the focus of Assumptions \ref{assume:dgdn} and \ref{assume:responsible}.}
	\label{fig:assumption}
\end{figure}
%

%%%%%%%%%%%%%%%%%%%%%%%%%%%%%%%%%%%%%%%%%%%%%%%%%%%%%%%

\section{Stability analysis of two-population game}\label{sec:dynamics}

In this section, we characterize the dynamical behavior of the two-population coupled system by analyzing the local stability properties of all of its fixed points. 

\begin{theorem}\label{thm:stability}
	The asymptotic dynamics of the two-population system  \eqref{eq:2pop} are summarized below.
	
	\noindent 1) Suppose $\alpha_2 > \theta_1$. Then a tragedy of the commons is  asymptotically stable, i.e. $\lim_{t\rightarrow \infty} n(t) = 0$.
	
	\vspace{1mm}
	
	\noindent 2) Suppose $\alpha_2 < \theta_1$. 
	\begin{enumerate}[label=(\alph*)]
		\item If $\frac{\alpha_2 - \theta_1}{\alpha_1 + \alpha_2}\dSP0^1 <  \dRT0^1 < \frac{\dTR1}{\dPS1}\dSP0^1$, then the only asymptotically stable fixed point is of the form $(x_1^*,0,n^*)$, where
	\be\label{eq:nstar}
		x_1^* := \frac{\alpha_1 + \alpha_2}{\alpha_1 + \theta_1}, \quad n^* := -\frac{g_1(x_1^*,0)}{\dgdn(x_1^*)}.
	\ee
		\item If $\dRT0^1 \leq \frac{\alpha_2 - \theta_1}{\alpha_1 + \alpha_2}\dSP0^1$, then a tragedy of the commons is the only stable outcome.
	\end{enumerate}
	
	\vspace{1mm}
	
	\noindent 3) Suppose $\alpha_2 = \theta_1$. If $\dRT0^1 > 0$, then there is a locally stable line segment of fixed points given by 
	\be
		\left\{ (1,0,n) : n \in \left[0,\frac{\dRT0^1}{\dRT0^1 + \dTR1^1}\right) \right\}
	\ee
	All other fixed points are isolated and unstable. If $\dRT0^1 \leq 0$, then a tragedy of the commons is asymptotically stable.
\end{theorem}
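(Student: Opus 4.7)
My plan is to exploit a two-scale reduction. First, I would argue that population~2 goes extinct regardless of the environmental trajectory: under Assumption~\ref{assume:responsible}, $g_2(x_2,0) = \dRT0^2 x_2 + \dSP0^2(1-x_2) < 0$ and $g_2(x_2,1) = -\dTR1^2 x_2 - \dPS1^2(1-x_2) < 0$, and since $g_2$ is affine in $n$ by Assumption~\ref{assume:dgdn}, we get $g_2 < 0$ on all of $[0,1]^2$. Hence $\dot{x}_2 < 0$ on $(0,1)^3$, so $x_2(t) \to 0$, and every $\omega$-limit point of an interior orbit lies on the invariant plane $\Pi := \{x_2 = 0\}$. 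On $\Pi$, the dynamics reduce to a planar system that differs from \eqref{eq:1pop} only in the tipping condition: $\dot{n} > 0$ iff $x_1 > x_1^* := (\alpha_1+\alpha_2)/(\alpha_1+\theta_1)$. This lets me import the structure of Theorem~\ref{thm:PNAS} with $x_1^*$ playing the role of $\alpha/(\alpha+\theta)$.

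The case split is then natural. If $\alpha_2 > \theta_1$ (Case~1), then $x_1^* > 1$, so $\dot{n} < 0$ uniformly on $\Pi \cap (0,1)^2$ and $n \to 0$. If $\alpha_2 < \theta_1$ (Case~2), then $x_1^* \in (0,1)$ and I look for an interior equilibrium on $\Pi$. Since $g_1$ is linear in $n$ and $g_1(x_1^*,1) < 0$ by Assumption~\ref{assume:PD}, such an equilibrium exists with $n^* \in (0,1)$ iff $g_1(x_1^*,0) > 0$; a direct computation, using $1 - x_1^* = (\theta_1-\alpha_2)/(\alpha_1+\theta_1)$, rewrites this as the lower bound $\dRT0^1 > \frac{\alpha_2 - \theta_1}{\alpha_1+\alpha_2}\dSP0^1$ in~2(a). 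The Jacobian at $(x_1^*,0,n^*)$ is block-triangular: the $x_2$-direction contributes the eigenvalue $g_2(0,n^*) < 0$, and the $(x_1,n)$-block has positive determinant (using $\partial g_1/\partial n < 0$ from Assumption~\ref{assume:dgdn}) and negative trace under the upper bound $\dRT0^1 < (\dTR1^1/\dPS1^1)\dSP0^1$ inherited from Assumption~\ref{assume:responsible}. This is exactly the $(x,n)$ argument of Theorem~\ref{thm:PNAS}, item~1, so the FP is asymptotically stable. In Case~2(b) the interior equilibrium disappears ($g_1(x_1^*,0) \leq 0$), and a check of the remaining corner fixed points on $\Pi$ shows every stable one has $n = 0$.

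Case~3 ($\alpha_2 = \theta_1$) is the degenerate one. Now $x_1^* = 1$ makes $h(1,0) = 0$, so the entire segment $\{(1,0,n) : n \in [0,1]\}$ is a curve of fixed points. The Jacobian at each such point is block-triangular with eigenvalues $\{-g_1(1,n),\,g_2(0,n),\,0\}$; the zero eigenvalue is tangent to the fixed-point curve (so not destabilizing), $g_2(0,n) < 0$ always, and $-g_1(1,n) < 0$ iff $g_1(1,n) = (1-n)\dRT0^1 - n\dTR1^1 > 0$, i.e. $n < \dRT0^1/(\dRT0^1+\dTR1^1)$, yielding the half-open interval when $\dRT0^1 > 0$. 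When $\dRT0^1 \leq 0$, no point on the line is transversely stable; since $h(x_1,x_2) \leq 0$ on $\Pi$ (with equality only at $x_1=1$), $n$ is nonincreasing and converges to $0$. The hardest part I expect is legitimising the degenerate line in Case~3: the zero eigenvalue demands either a center-manifold or normally-hyperbolic-invariant-manifold argument to promote linear transverse contraction to a genuine local attractor, and the exclusion of the right endpoint requires verifying that $g_1(1,n)$ changes sign outward, pushing $x_1$ off the line. A secondary subtlety in Case~2 is ruling out an OTOC-type heteroclinic cycle on $\Pi$, which I expect to handle by inheriting the argument from Theorem~\ref{thm:PNAS}~item~1 since the upper $\mathcal{V}$-bound survives the reduction intact.
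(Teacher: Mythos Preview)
Your proposal is correct and follows essentially the same route as the paper: first argue $g_2<0$ everywhere so $x_2(t)\to 0$, then restrict attention to fixed points on the face $\{x_2=0\}$ and classify them by Jacobian eigenvalues, invoking a center-manifold argument for the degenerate line in Case~3. The only cosmetic differences are that the paper computes the $2\times 2$ eigenvalues explicitly (you use trace/determinant), and for Case~1 it packages the ``$\dot n<0$ once $x_2$ is small'' observation as a LaSalle argument with $V=n$ rather than an $\omega$-limit-set argument; the paper also tabulates all face fixed points exhaustively instead of appealing back to Theorem~\ref{thm:PNAS}.
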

Several remarks are in order. In item 1 above, the irresponsible population induces a tragedy of the commons if its consumption rate is higher than the responsible population's restoration rate. Item 2a  provides a region of sustainable policies for population 1. This region gets smaller as $\alpha_2$ increases while remaining less than $\theta_1$. Item 2b provides a region where the population 1 policy fails to sustain the resource even though $\alpha_2 < \theta_1$. Item 3 presents an edge case where there are an infinite number of fixed points, all with different resource levels. The fixed point on the line that the system converges to depends on the initial condition.

Moreover, we observe the sustained resource level $n^*$ in item 2a is a decreasing function in $\alpha_2$.  Taking the derivative with respect to $\alpha_2$, we obtain
\be\nonumber
	\frac{\prt n^*}{\prt \alpha_2} = \frac{g_1(x_1^*,0) \frac{\prt^2 g_1}{\prt n \prt x} - \dgdx(x_1^*,0)\dgdn(x_1^*) }{(\alpha_1+\theta_1)(\dgdn(x_1^*))^2}
%	\frac{\prt n^*}{\prt \alpha_2} = -\frac{b_1c_1 - a_1d_1}{(\alpha_1+\theta_1) (\dgdn(\bar{\alpha}_1+ \bar{\alpha}_2))^2} < 0.
\ee
It is negative since the numerator being negative is equivalent to the condition $\dRT0^1 \leq \frac{\dTR1}{\dPS1}\dSP0^1$.

The proofs for each part of Theorem \ref{thm:stability} are presented in Appendix \ref{sec:app_dynamics}.

%The sustained level $n^* \in (0,1)$. To verify this, we note $g_1$ is  a bilinear function in $x_1$ and $n$:
%\be
%	g_1(x_1,n) = a_1 x_1 n + b_1 x_1 + c_1 n + d_1
%\ee
%where
%	\be\label{eq:g_coeffs}
%	\ba
%	a_1 &:= \dSP0^1 - \dRT0^1 + \dPS1^1 - \dTR1^1 \\
%	b_1 &:= \dRT0^1 - \dSP0^1 \\
%	c_1 &:= -(\dPS1^1 + \dSP0^1) \\
%	d_1 &:= \dSP0^1.
%	\ea
%	\ee
%We may then express
%\be
%	\ba
%	n^* &= -\frac{b_1 (\bar{\alpha}_1+ \bar{\alpha}_2) + d_1}{a_1(\bar{\alpha}_1+ \bar{\alpha}_2) + c_1} 
%	\ea
%\ee
%where $\bar{\alpha}_j := \frac{\alpha_j}{\alpha_1 + \theta_1}$. To show $n^*>0$, note that the denominator is negative by Assumption \ref{assume:dgdn}. The numerator is positive as it can be written $\dRT0^1 (\bar{\alpha}_1+ \bar{\alpha}_2) + \dSP0^1 \frac{\theta_1 - \alpha_2}{\alpha_1 + \alpha_2} $, and this quantity is positive if and only if $\frac{\alpha_2 - \theta_1}{\alpha_1 + \alpha_2}\dSP0^1 \leq  \dRT0^1$. Thus, $n^* > 0$. The condition $n^* < 1$ is equivalent to $(\alpha_2 - \theta_1) \dPS1^1 - (\alpha_1 + \alpha_2) \dTR1^1 < 0$, which is satisfied for $\alpha_2 \leq \theta_1$.

%%%%%%%%%%%%%%%%%%%%%%%%%%%%%%%%%%%%%%%%%%%%%%%%%%%%%%%

%
\begin{figure*}
	\centering
	\includegraphics[scale=.3]{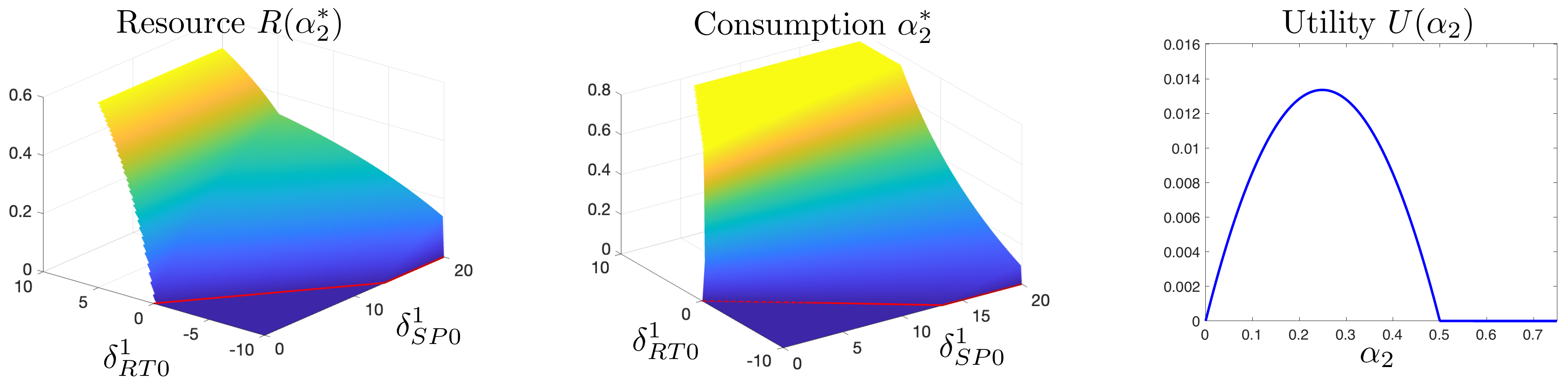}
	\caption{(Left) This surface plot shows the resource level that results from the optimal consumption rate $\alpha_2^*$ detailed in Theorem \ref{thm:consume}. The red line indicates the bottom border of the set of feasible policies for population 1. We observe that $R(\alpha_2^*)$ is increasing in both $\dRT0^1$ and $\dSP0^1$.  (Center) The optimal consumption rate detailed in Theorem \ref{thm:consume}.  (Right) An example of the utility function $U_2(\alpha_2)$ under policy $(\dSP0^1,\dRT0^1) = (3,-0.5)$. For all simulations, the parameter values are: $\dTR1 = 10$, $\dPS1 = 6$, $\theta_1 = 0.75$, $\alpha_1 = 1$, $\epsilon = 0.1$.}
	\label{fig:sims}
\end{figure*}

\section{Results: exploitation of resources}\label{sec:utility}

In this section, we study the following hierarchical decision problem posed informally as: \emph{how much consumption can the irresponsible population get away with?} To approach this question, we consider a local authority for population 2 that may set the consumption rate $\alpha_2 \geq 0$. This decision is representative of, for example, water usage or fishing regulations. The \emph{utility} that the authority seeks to maximize is defined by
\be\label{eq:U}
	U(\alpha_2) := \alpha_2 \cdot R(\alpha_2).
\ee
where $R(\cdot)$ summarizes the resource level that results in the asymptotic outcome of the dynamics. From Theorem \ref{thm:stability}, we define it as
\be\label{eq:R}
	R(\alpha_2) :=
	\begin{cases}
		n^*(\alpha_2), &\text{if } \alpha_2 \leq \theta_1 \text{ and} \\ 
		&\quad \frac{\alpha_2 - \theta_1}{\alpha_1+\alpha_2} \dSP0^1 \leq \dRT0^1 \leq \frac{\dTR1}{\dPS1}\dSP0^1 \\
		0, &\text{else }
	\end{cases}
\ee
where $n^*(\alpha_2) := -\frac{g_1(\frac{\alpha_1 + \alpha_2}{\alpha_1 + \theta_1},0)}{\dgdn(\frac{\alpha_1 + \alpha_2}{\alpha_1 + \theta_1})}$ is taken directly from \eqref{eq:nstar}. In the edge case $\alpha_2 = \theta_1$, Theorem \ref{thm:stability} states there is a range $n \in [0,\frac{\dRT0^1}{\dRT0^1 + \dTR1^1}]$ of possible asymptotic resource levels depending on the initial condition. In our definition of $R(\cdot)$, we have elected to assign the highest possible resource\footnote{This is done for two reasons. First, it makes the resource function well-defined and left-continuous at $\theta_1$. Thus, $R$ attains a maximum value in the interval $[0,\theta_1]$. Second, since no particular initial condition is prescribed in the model, we may view the irresponsible population to be ``optimistic'' regarding the best case among all possible outcomes.} value $R(\alpha_2) = \frac{\dRT0^1}{\dRT0^1 + \dTR1^1}$.

The choice to increase consumption comes at the cost of worsening or even destroying the environmental resource. Thus, \eqref{eq:U} captures the tension between resource consumption and the stability of the resource. The optimal consumption rate can be determined by solving the  problem

\be\label{eq:OC}
	\alpha_2^* := \arg\max_{\alpha_2 \geq 0} U(\alpha_2). \tag{OC}
\ee
Note here that we are assuming a fixed environmental policy $(\dSP0^1,\dRT0^1)$ for population 1 that satisfies Assumption \ref{assume:responsible}. Our main result below  provides a full characterization of the optimal consumption rate and utility.

\begin{theorem}[Optimal consumption rate]\label{thm:consume}
	The optimal solution and value of \eqref{eq:OC} are given as follows.
	\begin{enumerate}[label=(\alph*)]
		\item If $C(\dSP0^1) \leq \dRT0^1 < \frac{\dTR1^1}{\dPS1^1} \dSP0^1$, then
		\be
			\alpha_2^* = \theta_1
%			U^* = -\theta_1\frac{g_1(1,0)}{\dgdn(1)}
		\ee
		and $R(\alpha_2^*) = \frac{\dRT0^1}{\dRT0^1 + \dTR1^1}$.
		
%		$R(\alpha_2^*) = -\frac{g_1(1,0)}{\dgdn(1)}$.
		\item If $\max\{-\frac{\theta_1}{\alpha_1} \dSP0^1, -\dTR1^1 \} \leq \dRT0^1 \leq C(\dSP0^1)$, then
		\be\label{eq:crit_alpha}
			\alpha_2^* = -\frac{(\alpha_1 + \theta_1)}{a_1}\dgdn(\bar{\alpha}_1) \left( 1 - \sqrt{\frac{Y}{b_1 \dgdn(\bar{\alpha}_1) }} \right)
%			U^* = a_1^{-2} (\alpha_1+\theta_1)\left(\sqrt{b_1 \dgdn(\bar{\alpha}_1)} - \sqrt{T} \right)^2
		\ee
		and $R(\alpha_2^*) = \frac{\sqrt{b_1\dgdn(\bar{\alpha}_1)} - \sqrt{Y} }{a_1 \sqrt{b_1^{-1}\dgdn(\bar{\alpha}_1)}}$, where we have defined
		\be
			\small
		\ba
			&C(\dSP0^1) := \half\left[ -((1-\bar{\alpha}_1) \dPS1^1 + \dTR1^1 ) + \right. \\
			&\quad\quad \left. \sqrt{((1-\bar{\alpha}_1)\dPS1^1 + \dTR1^1 )^2 + 4(1-\bar{\alpha}_1) \dTR1^1 \dSP0^1 } \right]
		\ea
		\ee
		\be
		\ba
			a_1 &:= \dSP0^1 - \dRT0^1 + \dPS1^1 - \dTR1^1 \\
			b_1 &:= \dRT0^1 - \dSP0^1 \\
			\bar{\alpha}_1 &:= \frac{\alpha_1}{\alpha_1 + \theta_1}		
			\ea\nonumber
		\ee
		and
		\be
			Y := \dTR1^1 \dSP0^1 - \dRT0^1 \dPS1^1
		\ee
	\end{enumerate}
\end{theorem}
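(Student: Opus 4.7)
My approach is to analyze the smooth map $\alpha_2 \mapsto U(\alpha_2) = \alpha_2 n^*(\alpha_2)$ on the interval where $R > 0$, then compare any interior critical point with the boundary value at $\alpha_2 = \theta_1$. Since $g_1$ is affine in each argument, both $g_1(x_1^*,0)$ and $-\dgdn(x_1^*)$ are affine functions of $x_1^* = (\alpha_1+\alpha_2)/(\alpha_1+\theta_1)$, so $n^*(\alpha_2)$ is a ratio of two affine functions and is smooth wherever positive. A direct algebraic check shows that the first inequality of the feasibility condition in \eqref{eq:R} is equivalent to $g_1(x_1^*,0) \geq 0$, so $n^*$ itself vanishes at the right endpoint of feasibility; together with the convention $R(\theta_1) = \dRT0^1/(\dRT0^1+\dTR1^1)$, this makes $U$ continuous on $[0,\infty)$ and supported inside $[0,\theta_1]$. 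Hence the global maximum is attained either at an interior critical point of $\alpha_2 n^*(\alpha_2)$ or at the boundary $\alpha_2 = \theta_1$.

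The next step is to derive the first-order condition for the interior. Setting $y := \alpha_2/(\alpha_1+\theta_1)$ and abbreviating $G_0 := g_1(\balph_1,0)$, $D_0 := -\dgdn(\balph_1)$ (both positive under Assumption \ref{assume:responsible}), one has the expansions $g_1(x_1^*,0) = G_0 + b_1 y$ and $-\dgdn(x_1^*) = D_0 - a_1 y$. Differentiating $U(y) = (\alpha_1+\theta_1)\, y(G_0+b_1 y)/(D_0 - a_1 y)$ and using the identity $b_1 D_0 + a_1 G_0 = -Y$ (which is verifiable by direct expansion) reduces the critical-point equation to the quadratic
\begin{equation*}
a_1 b_1 y^2 - 2 b_1 D_0 y - G_0 D_0 = 0.
\end{equation*}
Solving this quadratic and selecting the unique root lying inside the feasible interval yields $y^* = (D_0/a_1)\bigl(1-\sqrt{-Y/(b_1 D_0)}\bigr)$, which after rescaling by $\alpha_1+\theta_1$ matches \eqref{eq:crit_alpha}. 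Substituting $y^*$ into $n^*(y)$ and simplifying the resulting nested radicals gives the claimed value of $R(\alpha_2^*)$.

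To derive the threshold $C(\dSP0^1)$ separating cases (a) and (b), I would compute $\left.dU/d\alpha_2\right|_{\alpha_2=\theta_1}$. Using $(n^*)'(x_1^*) = -Y/(-\dgdn(x_1^*))^2$, this derivative admits a compact expression in the primitive parameters; the boundary $\alpha_2=\theta_1$ is globally optimal precisely when this derivative is nonnegative. Clearing denominators reduces that inequality to a quadratic in $\dRT0^1$ whose positive root is exactly $C(\dSP0^1)$. Thus $\dRT0^1 \geq C(\dSP0^1)$ implies $U$ is non-decreasing on $[0,\theta_1]$ (case (a)), while $\dRT0^1 \leq C(\dSP0^1)$ forces the interior critical point into $[0, 1-\balph_1)$ (case (b)).

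\textbf{Main obstacle.} The substantive work lies in symbolic manipulation: massaging the quadratic solution into the compact form in \eqref{eq:crit_alpha}, uniformly justifying the root selection across the possible sign configurations of $a_1$ and $b_1$, and simplifying the nested square roots in the expression for $R(\alpha_2^*)$. Continuity of the optimum at $\dRT0^1 = C(\dSP0^1)$ -- i.e. that $y^* = 1-\balph_1$ exactly at the threshold -- is a nontrivial consistency check needed to glue the two cases together, and I would verify it by substituting $\dRT0^1 = C(\dSP0^1)$ into the closed form for $y^*$ and reducing using the defining quadratic for $C$.
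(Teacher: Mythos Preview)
Your plan follows essentially the same route as the paper's proof: reduce to a one-variable calculus problem on the support of $U$, derive the same quadratic first-order condition, and identify the threshold $C(\dSP0^1)$ as the positive root of the quadratic in $\dRT0^1$ obtained from the sign of $U'(\theta_1)$. Your identity $b_1 D_0 + a_1 G_0 = -Y$ and the reparametrization in $y$ are the same manipulations the paper carries out.

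There is, however, one genuine gap. You assert that ``$\dRT0^1 \geq C(\dSP0^1)$ implies $U$ is non-decreasing on $[0,\theta_1]$'' and that the boundary is globally optimal ``precisely when'' $U'(\theta_1)\geq 0$. Neither implication follows from the sign of $U'$ at a single endpoint; a priori $U$ could have an interior hump exceeding $U(\theta_1)$ even when $U'(\theta_1)\geq 0$, or several critical points when $U'(\theta_1)<0$. The paper closes this gap by proving that $U$ is \emph{concave} on its entire support: it computes $U''(\alpha_2)$, shows its sign is governed by a factor that is monotone in $\alpha_2$, and checks the boundary value to conclude $U''<0$ throughout. Concavity, together with $U'(0)=n^*(0)>0$, is what makes the dichotomy clean: if $U'(\theta_1)\geq 0$ then $U'\geq 0$ everywhere (case~(a)); if $U'(\theta_1)<0$ then the unique critical point is automatically the global maximum (case~(b)). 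You should add this step. An alternative that stays closer to your outline would be to note that $U'=0$ is equivalent to a quadratic in $y$ with at most two real roots, show (as the paper does for the ``plus'' root) that the second root always lies outside the support, and combine this with $U'(0)>0$ to force the single sign change; but either way the global structure must be argued, and it is the actual crux rather than the radical bookkeeping you flag as the main obstacle.

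A minor point: the paper also splits case~(b) according to the sign of $\dRT0^1$, because for $\dRT0^1\leq 0$ the support of $U$ is the strictly smaller interval $[0,\tfrac{\dSP0^1}{\dSP0^1-\dRT0^1}\theta_1+\tfrac{\dRT0^1}{\dSP0^1-\dRT0^1}\alpha_1]$ rather than all of $[0,\theta_1]$. Your description of the support as ``inside $[0,\theta_1]$'' is correct but you will need this finer description when you check that the selected root actually lands in the support.
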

Item a) specifies the range of population 1 environmental policies where population 2 benefits most from the maximal consumption rate $\alpha_2^* = \theta_1$. Any higher consumption rate will cause the resource to collapse. In this range, the resource (and consequently, utility) is highly sensitive at the threshold, since $R(\theta_1) > 0$ and discontinuously drops to zero for $\alpha_2 > \theta_1$. 

Item b) gives the range of environmental policies where population 2 benefits most from a consumption rate that is not maximal, i.e. $\alpha_2^* < \theta_1$ \eqref{eq:crit_alpha}. Any higher consumption causes the resource to degrade marginally faster, i.e. $U(\alpha_2)$ becomes a decreasing function for $\alpha_2 > \alpha_2^*$. Unlike in the policies from item a), the utility $U(\alpha_2)$ maintains continuity on $\alpha_2 \geq 0$, even as it becomes zero for all $\alpha_2 > \theta_1$.

The proof of Theorem \ref{thm:consume} is given in Appendix \ref{sec:app_consumption}. Numerical computations of the optimal quantities are shown in Figure \ref{fig:sims}.

%%%%%%%%%%%%%%%%%%%%%%%%%%%%%%%%%%%%%%%%%%%%%%%%%%%%%%%
\section{Simulations: Sensitivity of cooperation incentives}\label{sec:sensitivity}

From the perspective of population 2, the optimal consumption rate is chosen to always result in a non-zero resource level, $R(\alpha_2^*)$, given a \emph{fixed} population 1 policy $(\dSP0^1,\dRT0^1)$. In this section, we seek to understand how the choice of environmental policy $(\dSP0^1,\dRT0^1)$ impacts the resulting resource level, under the optimal consumption rate for population 2. Intuitively, suppose a local authority for population 1 has the option to administer incentives to promote the low consumption strategy, given that the irresponsible population will optimally exploit the updated policy. 

The updated policy becomes  $(\dSP0^1 + u_s,\dRT0^1 + u_r)$, where $u_s, u_r \geq 0$ are the added incentives. The payoff matrix $A_0^1$ then reads as
\be
	\begin{bmatrix} R_0 + u_r & S_0 + u_s \\ T_0 & P_0 \end{bmatrix}
\ee
The addition of the $u_r$ term represents added incentives for \emph{mutual cooperation} -- individuals that practice low consumption are rewarded when many also practice the low consumption strategy. On the other hand, the addition of the $u_s$ term represents added incentives for \emph{unilateral cooperation} -- individuals that practice low consumption are rewarded when many practice the high consumption strategy. How should the authority select $u_s,u_r$? Which type of incentive is more effective in promoting the resource level?

\subsection{Calculation of sensitivities}

Our approach here is to evaluate the sensitivity of $R(\alpha_2^*)$ with respect to small changes in the policy. In other words, we calculate the gradient of the (re-defined) function $R^*(\dSP0^1,\dRT0^1) := R(\alpha_2^*)$ with respect to $(\dSP0^1,\dRT0^1)$. 

The partial derivative of $R^*$ with respect to $\dSP0^1$, in the region $C(\dSP0^1) \leq \dRT0^1 \leq \frac{\dTR1^1}{\dPS1^1} \dSP0^1$ (part (a) of Theorem \ref{thm:consume}), is simply $\frac{\prt R^*}{\prt \dSP0^1} = 0$. In this case, the addition of any $u_s$ incentive has \emph{no effect} on the resource level. In the region $\max\{-\frac{\theta_1}{\alpha_1} \dSP0^1, -\dTR1^1 \} \leq \dRT0^1 \leq C(\dSP0^1)$ (part (b) of Theorem \ref{thm:consume}), the partial derivative with respect to $\dSP0^1$ is calculated to be
\be
	\ba
	\frac{\prt R^*}{\prt \dSP0^1} &= \frac{\dPS1^1-\dTR1^1}{a_1^2}(1 - \phi) + \\ 
	&\frac{-b_1}{2a_1} \phi\left(\frac{1}{-b_1} - \frac{\dTR1^1}{Y} + \frac{1-\balph_1}{-\dgdn(\balph_1)} \right)
	\ea
\ee
where
\be
	\phi := \sqrt{\frac{Y}{b_1 \dgdn(\balph_1)}},
\ee 
depends on the policy $(\dSP0^1,\dRT0^1)$, and $a_1$, $b_1$, and $Y$ are variables that also depend on the policy $(\dSP0^1,\dRT0^1)$, and were defined in the statement of Theorem \ref{thm:consume}. The sign of $b_1=\dRT0^1-\dSP0^1$ is negative, since $C(\dSP0^1) \leq \dSP0^1$ with equality if and only if $\dSP0^1 = 0$.

The partial derivative of $R^*$ with respect to $\dRT0^1$, in the region $C(\dSP0^1) \leq \dRT0^1 \leq \frac{\dTR1^1}{\dPS1^1} \dSP0^1$ (part (a) of Theorem \ref{thm:consume}), is calculated to be
\be\nonumber
	\frac{\prt R^*}{\prt \dRT0^1} = \frac{\dTR1^1}{(\dRT0^1 + \dTR1^1)^2} > 0.
\ee
In the region $\max\{-\frac{\theta_1}{\alpha_1} \dSP0^1, -\dTR1^1 \} \leq \dRT0^1 \leq C(\dSP0^1)$ (part (b) of Theorem \ref{thm:consume}), the partial derivative with respect to $\dRT0^1$ is calculated to be
\be
	\ba
	\frac{\prt R^*}{\prt \dRT0^1} &= \frac{\dTR1^1 - \dPS1^1}{a_1^2}(1 - \phi) + \\ 
	&\frac{-b_1}{2a_1} \phi\left(\frac{1}{b_1} + \frac{\dPS1^1}{Y} + \frac{\balph_1}{-\dgdn(\balph_1)} \right)
	\ea\nonumber
\ee

\begin{figure*}
	\centering
	\includegraphics[scale=.3]{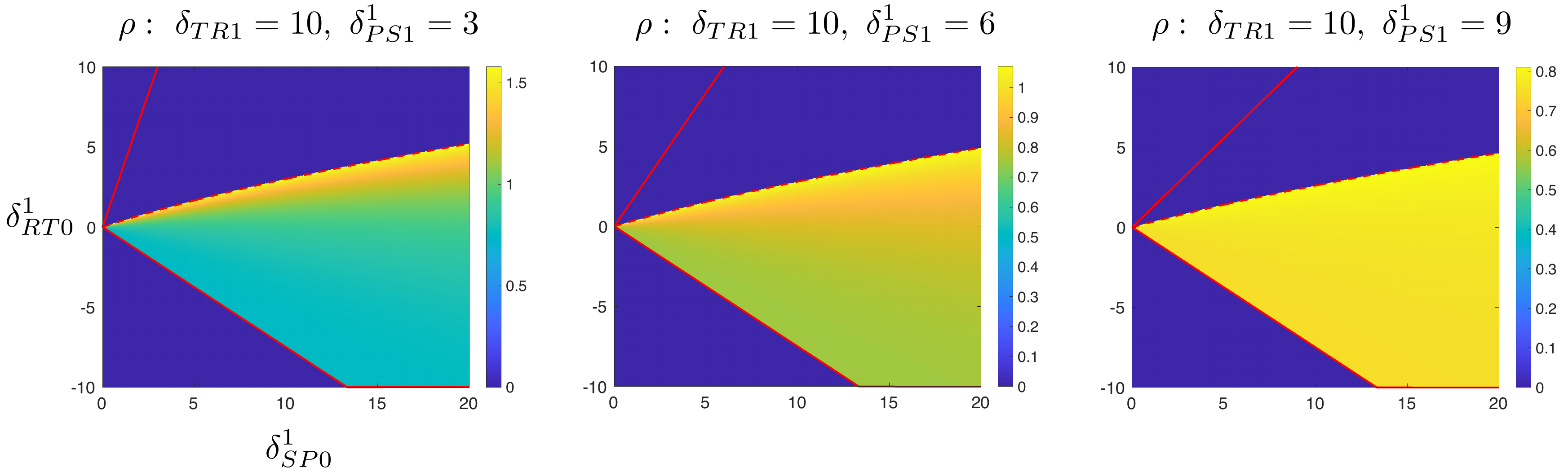}
	\caption{\emph{Sensitivity ratios}. This series of plots shows the sensitivity ratio $\rho$ as the population 1 policy $(\dSP0^1,\dRT0^1)$ varies. The red lines indicate the boundaries of the set of feasible policies for population 1. In all plots, we observe the ratio is highest near the curve $C(\dSP0^1)$ (dashed red line), i.e. $u_s$ becomes effective, and indeed can be more effective than $u_r$ for lower values of $\dPS1^1$ (left and center plots). However,  $u_s$ is generally not as effective relative to $u_r$. For higher values of $\dPS1^1$, $u_s$ will never be as effective as $u_r$, i.e. $\rho < 1$ for all policies (right plot). For all simulations, the parameter values are: $\dTR1 = 10$, $\theta_1 = 0.75$, $\alpha_1 = 1$, $\epsilon = 0.1$.}
	\label{fig:sensitivity}
\end{figure*}

The following basic property holds for both sensitivities.
\begin{proposition}
	For any set of fixed parameter values $\dTR1^1$, $\dPS1^1$, $\bar{\alpha}_1$, and any feasible policy $(\dSP0^1,\dRT0^1)$, it holds that $\frac{\prt R^*}{\prt \dSP0^1},\frac{\prt R^*}{\prt \dRT0^1} \geq 0$.
\end{proposition}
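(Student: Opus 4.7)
The plan is to split the proof along the two regimes of Theorem \ref{thm:consume}. In regime (a), where $R^*(\dSP0^1, \dRT0^1) = \dRT0^1/(\dRT0^1 + \dTR1^1)$, both partials follow by direct computation: $\partial R^*/\partial \dSP0^1 = 0$, and $\partial R^*/\partial \dRT0^1 = \dTR1^1/(\dRT0^1 + \dTR1^1)^2 > 0$, where the denominator does not vanish because Assumption \ref{assume:dgdn} guarantees $\dRT0^1 > -\dTR1^1$.

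The substance lies in regime (b). First I would pin down the signs of every quantity appearing in the stated partial-derivative expressions. Assumption \ref{assume:dgdn} gives $\dgdn(\balph_1) < 0$. The definition of $C(\cdot)$ from Theorem \ref{thm:consume} yields $C(\dSP0^1) \leq \dSP0^1$, so the regime condition $\dRT0^1 \leq C(\dSP0^1)$ gives $b_1 = \dRT0^1 - \dSP0^1 \leq 0$. Requiring $\alpha_2^* \in (0, \theta_1)$ via \eqref{eq:crit_alpha} and $R^*(\alpha_2^*) \in [0, 1]$ then forces $a_1 > 0$, $Y \geq 0$, and $\phi \in (0, 1]$ (the upper bound on $\phi$ coming from squaring the non-negativity of $R^*$ against the simplified form $R^* = (-b_1 - \sqrt{b_1 Y/\dgdn(\balph_1)})/a_1$).

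With these signs fixed, each stated partial is a sum of two terms. The first, proportional to $(\dPS1^1 - \dTR1^1)(1 - \phi)/a_1^2$, has the sign of $\dPS1^1 - \dTR1^1$ (with a $\pm$ that flips between the two partials). The second carries the positive multiplier $-b_1 \phi/(2 a_1)$, but its bracketed factor is a signed combination of three terms and is the main obstacle. The strategy I would use is to substitute the identity $Y = \phi^2 b_1 \dgdn(\balph_1)$ (from the definition of $\phi$) to eliminate $Y$ throughout, reducing each partial to a rational expression in $\phi$ and the policy parameters. The claim then becomes a polynomial inequality in $\phi$, which I would verify using the remaining feasibility bound $\dRT0^1 \geq -\theta_1 \dSP0^1/\alpha_1$ (equivalently, $\alpha_2^* \leq \theta_1$) to restrict $\phi$ to an interval where the polynomial is non-negative. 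The two partials are handled symmetrically, and continuity of $R^*$ across the boundary $\dRT0^1 = C(\dSP0^1)$ with the regime (a) case provides a useful sanity check on the sign at the interface.
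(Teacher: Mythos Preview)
The paper does not actually supply a proof of this proposition; it is stated without argument. So there is no paper proof to compare your route against, and your proposal should be judged on its own merits.

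Your regime-(a) computation is fine and matches the derivative expressions the paper already records. The difficulty is entirely in regime (b), and here your sign analysis contains a concrete error. You assert that $\alpha_2^*\in(0,\theta_1)$ and $R^*\in[0,1]$ force $a_1>0$ and $\phi\in(0,1]$. This is false: the paper's own proof of Theorem~\ref{thm:consume} shows that the sign of $1-\phi$ \emph{coincides} with the sign of $a_1$, so that $\alpha_2^*>0$ holds in two cases, $a_1>0$ with $\phi<1$ \emph{or} $a_1<0$ with $\phi>1$. A simple check (e.g.\ $\dTR1^1=10$, $\dPS1^1=1$, $\dSP0^1=2$, $\dRT0^1=0.5$, $\bar\alpha_1\approx 0.57$) gives $a_1<0$ and $\phi>1$ while remaining squarely in regime (b). Since your downstream plan leans on the factorization ``first term has sign $\pm(\dPS1^1-\dTR1^1)(1-\phi)$, second term has positive prefactor $-b_1\phi/(2a_1)$,'' both of these sign assignments collapse once $a_1<0$ is allowed: $(1-\phi)$ changes sign and $-b_1/a_1$ becomes negative. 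So the decomposition you propose does not, as written, control the sign of either partial derivative.

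Beyond that, the regime-(b) argument is only a plan: the substitution $Y=\phi^2 b_1\,\partial g_1/\partial n(\bar\alpha_1)$ is a sensible reduction, but you have not exhibited the resulting polynomial inequality, identified its degree in $\phi$, or shown how the feasibility constraint $\dRT0^1\ge -(\theta_1/\alpha_1)\dSP0^1$ translates into a usable bound on $\phi$. Given that the sign assumptions feeding into this reduction are already incorrect, the plan would need to be reworked to handle both signs of $a_1$ (equivalently, both $\phi<1$ and $\phi>1$) before one could assess whether the polynomial argument goes through.
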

Thus, increasing incentive for low consumption can never make the resource level worse under the optimal consumption rate of population 2.

\subsection{Simulations: comparison of incentives}

To address the question of which type of incentive, $u_r$ or $u_s$, is more effective, we inspect the \emph{sensitivity ratio}
\be
	\rho(\dSP0^1,\dRT0^1) := \frac{\prt R^*/\prt \dSP0^1}{\prt R^*/ \prt \dRT0^1}
\ee
A ratio of $\rho < 1$ indicates that the incentive $u_r$ is more effective than $u_s$, and $\rho > 1$ indicates the opposite. A numerical computation of the sensitivity ratio is provided in Figure \ref{fig:sensitivity}. 

While a formal analysis is not yet provided in this paper, the numerical computations strongly suggest that the $u_r$ incentive is generally more effective than $u_s$. We observe that $\rho < 1$ for a large space of policies $(\dSP0^1,\dRT0^1)$, and $\rho > 1$ only when $(\dSP0^1,\dRT0^1)$ is close to the threshold curve $(\dSP0^1,C(\dSP0^1))$ and $\dPS1^1$ is sufficiently small relative to $\dTR1^1$. For larger values of $\dPS1^1$, we observe that $\rho > 1$ for all policies $(\dSP0^1,\dRT0^1)$. We also reiterate that $\rho = 0$ for all policies above the threshold curve $C(\dSP0^1)$, i.e. only the $u_r$ incentive can help improve the resource level.

%%%%%%%%%%%%%%%%%%%%%%%%%%%%%%%%%%%%%%%%%%%%%%%%%%%%%%%
\section{Conclusion and Future Work}

In this paper, we studied a feedback-evolving game with two populations that share a common environmental resource. This marks initial steps in extending the framework to multi-population interactions and hierarchical decsion-making. We focused on a scenario where one population is ``responsible" about using environmental resources, while the other population is ``irresponsible". We evaluated to what extent the irresponsible population can take advantage of the resources by deriving optimal consumption rates. Lastly, a sensitivity analysis was provided that suggests incentivizing mutual cooperation is more effective than unilateral cooperation. Future work will analyze how multiple irresponsible populations interact with one another. Beyond the scope of the present study, one future direction is to investigate populations that each have local environmental resources. Another interesting consideration is to study evolutionary dynamics beyond the replicator equation.

%This work sets the stage for future studies involving hierarchical decision-making with multiple populations. For example, how would multiple irresponsible populations interact with one another? One can also consider scenarios where populations have their own local environmental resources that have externalities on others.

\bibliographystyle{IEEEtran}
\bibliography{library}

%%%%%%%%%%%%%%%%%%%%%%%%%%%%%%%%%%%%%%%%%%%%%%%%%
%=== ================APPENDIX ============================================
%%%%%%%%%%%%%%%%%%%%%%%%%%%%%%%%%%%%%%%%%%%%%%%%%
\appendices

\appendix

\begin{table*}
    \begin{center}
        \begin{tabular}{|c|c|c|c|}
        \hline
        Fixed point & Interpretation & Existence condition & Stability condition \\
        \hline
        $\bs{z}_\text{A} := (\frac{\alpha_1 + \alpha_2}{\alpha_1 + \theta_1},0,-\frac{b_1 x_1 + d_1}{a_1 x_1 + c_1})$ & Sustained environment & $\alpha_2 < \theta_1$, $\frac{\alpha_2 - \theta_1}{\alpha_1 + \alpha_2}\dSP0^1 <  \dRT0^1$ & $\dRT0^1 \leq \frac{\dTR1}{\dPS1}\dSP0^1$ \\        
        \hline
        $\bs{z}_\text{B} := (\frac{\dSP0^1}{\dSP0^1 - \dRT0^1},0,0)$ & Tragedy, mixed consumption & $\dRT0^1 \leq 0$  & $\dRT0^1 \leq \frac{\alpha_2 - \theta_1}{\alpha_1+\alpha_2}\dSP0^1$ \\
        \hline
        $\bs{z}_\text{C1} := (0,0,0)$ & Tragedy & Always   & Never (Assumption 3) \\
        \hline
        $\bs{z}_\text{C2} :=(0,0,1)$ & Abundance & Always  & Never \\
        \hline
        $\bs{z}_\text{C3} :=(1,0,0)$ & Tragedy &  Always &  $\dRT0^1 \geq 0$, $\alpha_2 > \theta_1$ \\
        \hline
        $\bs{z}_\text{C4} :=(1,0,1)$ & Abundance & Always  & Never \\
        \hline
        $(1,0,n)$ & Line segment $n\in[0,1]$ & $\alpha_2 = \theta_1$, $\dRT0^1 \geq 0$  &   $n \in [0,\frac{\dRT0^1}{\dRT0^1 + \dTR1^1} )$ \\
        \hline
        $(\frac{\dPS1^1}{\dPS1^1 - \dTR1^1},0,1)$ & Abundance & Never  & N/A \\
        \hline
        \end{tabular}
    \end{center}
    \caption{This table accompanies the proof of Theorem \ref{thm:stability}. It lists all fixed points of the form $(x_1,0,n)$. The column ``Existence condition'' specifies the paramter regime where the fixed point is contained in the set $[0,1]^3$. The column ``Stability condition'' specifies parameter regime where the fixed point is locally asymptotically stable, or equivalently when all of its eigenvalues are negative. The exception is the second-to-last row, which indicates a line of fixed points along a one-dimensional center manifold. Here, the condition indicates when its other two eigenvalues are negative.}
    \label{table:FPs}
\end{table*}

Throughout the Appendix, we make use of the following notations. The payoff difference function $g_i(x_i,n)$ ($i=1,2$) is bilinear in its two arguments, which we can write as
\be\nonumber
	g_i(x_i,n) = a_i x_i n + b_i x_i + c_i n + d_i
\ee
where we define
	\be\label{eq:g_coeffs}
	\ba
	a_i &:= \dSP0^i - \dRT0^i + \dPS1^i - \dTR1^i \\
	b_i &:= \dRT0^i - \dSP0^i \\
	c_i &:= -(\dPS1^i + \dSP0^i) \\
	d_i &:= \dSP0^i.
	\ea
	\ee
Consequently, we can write $\frac{\prt g_i}{\prt n}(x_i) = a_i x_i + c_i < 0$ (Assumption \ref{assume:dgdn}) and $\frac{\prt g_i}{\prt x_i}(n) = a_i n + b_i$. Moreover, we will make use of the variable
\be
	Y := b_1c_1 - a_1 d_1 = \dTR1^1\dSP0^1 - \dRT0^1\dPS1^1
\ee
which is positive if and only if $\dRT0^1 < \frac{\dTR1}{\dPS1}\dSP0^1$, and zero if it holds with equality.

\subsection{Analysis of dynamical system}\label{sec:app_dynamics}

To establish stability properties of system \eqref{eq:2pop}, we need to analyze its Jacobian matrix, which we compute below:
\be
	J(\bs{z}) := 
	\begin{bmatrix}
		J_{11} & 0 & J_{13}  \\
		0 & J_{22} & J_{23} \\
		\epsilon J_{31} & \epsilon J_{32} & \epsilon J_{33}  \\
	\end{bmatrix}
\ee
for any $\bs{z} \in [0,1]^3$, where
\be
	\ba
		J_{11} &= x_1(1-x_1)\frac{\prt g_1}{\prt x_1}(n) + (1-2x_1)g_1(x_1,n)  \\
		J_{13} &= x_1(1-x_1) \frac{\prt g_1}{\prt n}(x_1) \\
		J_{22} &= x_2(1-x_2)\frac{\prt g_2}{\prt x_2}(n) + (1-2x_2)g_2(x_2,n) \\
		J_{23} &= x_2(1-x_2)\frac{\prt g_2}{\prt n}(x_2)\\
	\ea\nonumber
\ee
\be
	\ba
		J_{31} &= n(1-n)(\theta_1 + \alpha_1) \\
		J_{32} &= n(1-n)(\theta_2 + \alpha_2) \\
		J_{33} &= (1-2n) (\theta_1 x_1 + \theta_2 x_2 - \alpha_1 (1-x_1) - \alpha_2 (1-x_2))
	\ea\nonumber
\ee

%A summary of all such fixed points and their stability conditions are given in Table \ref{table:FPs}, assuming that Assumptions \ref{assume:PD}, \ref{assume:dgdn}, and \ref{assume:responsible} hold.

% Line UB \frac{\dRT0^1}{\dRT0^1 + \dTR1^1}

\begin{proof}[Proof of Theorem \ref{thm:stability}]
	We first observe that by Assumption \ref{assume:responsible}, it must be the case that $x_2(t) \rightarrow 0$ for any initial condition in $(0,1)^3$ because $g_2(x_2,n) < 0$ for all $x_2,n$. In other words, for any $\zeta > 0$  there is some finite time $t_\zeta > 0$ such that $x_2(t) < \zeta$ for all $t \geq t_\zeta$. Hence, we only need to analyze stability properties of fixed points of the form $(x_1,0,n)$, i.e. ones lying in the $x_1,n$ plane. A complete list of the fixed points of this form and their stability properties is given in Table \ref{table:FPs}. The stability properties were derived by evaluating the eigenvalues of the Jacobian for each fixed point. The summary given in Table \ref{table:FPs} establishes the result of Theorem \ref{thm:stability}. However, we highlight the methodology from each item of the Theorem statement.
	
	\noindent \textbf{(1)} ${\alpha_2 > \theta_1}$.  We can analyze the limiting behavior by considering the function $V(\bs{z}) := n$. Its time derivative is  
	\be
		\epsilon n(1-n)(x_1\theta_1 - \alpha_2 - \alpha_1(1-x_1) + x_2(\theta_1 + \alpha_1) )
	\ee
	We can choose $\zeta > 0$ small enough and wait at least until time $t_\zeta$ such that $\dot{V} < 0$ for any $x_1 \in (0,1)$. By LaSalle's invariance principle, the set $n=0$ is  asymptotically stable from any interior initial condition.

%	Any fixed point that constitutes a tragedy of the commons is of the form $(x_1,0,0)$ with $x_1 \in [0,1]$. 
%	
%	For $x_1 = 1$, the eigenvalues of $J$ are $\{\dRT0^1, \dSP0^2, \epsilon(\theta - \alpha_2) \}$. They are all negative if and only if $\dRT0^1 < 0$ and $\alpha_2 > \theta_1$. 
%	
%	For $x_1 \in (0,1)$, it must follow that $g_1(x_1,0) = 0$, from which we get $x_1 = \frac{\dSP0^1}{\dSP0^1 - \dRT0^1}$. This implies $\dRT0^1 < 0$. The eigenvalues of $J$ are $\{x_1(1-x_1)(\dRT0^1 - \dSP0^1), \dSP0^2, \epsilon(x_1(\theta_1+\alpha_1) - (\alpha_1+\alpha_2) ) \}$. The first and second are negative, and the third is also negative if and only if $\frac{\alpha_2 - \theta_1}{\alpha_1+\alpha_2}\dSP0^1 > \dRT0^1$. This is satisfied since $\alpha_2 -\theta_1 > 0$ and $\dRT0^1 < 0$.
%	
%	For $x_1 = 0$, the  eigenvalues of $J$ are $\{\dSP0^1, \dSP0^2, -\epsilon(\alpha_1+\alpha_2) \}$. Since $\dSP0^1 > 0$ by Assumption \ref{assume:responsible}, $(0,0,0)$ is unstable.
%
%	Additionally, there is a line of equilibria of the form $(1,0,n)$ with $n \in [0,1]$. For any $n \in [0,1]$, the eigenvalues of $J$ are $\{ -(a_1+c_1)n - (b_1+d_1), c_2 n + d_2, \epsilon(1-2n)(\theta_1-\alpha_2) \}$. The first is negative if and only if $n < \frac{\dRT0^1}{\dRT0^1 + \dTR1}$, the second if and only if $\dSP0^2 > \dPS1^2$ (holds by Assumption \ref{assume:dgdn}), and the third if and only if $n < 1/2$. Thus, all equilibria $(1,0,n)$ with $n \in [0,\min\{\frac{\dRT0^1}{\dRT0^1 + \dTR1},\frac{1}{2} \})$ have negative eigenvalues.

	\noindent \textbf{(2a)} ${\alpha_2 < \theta_1}$ and $\frac{\alpha_2 - \theta_1}{\alpha_1 + \alpha_2}\dSP0^1 <  \dRT0^1 < \frac{\dTR1}{\dPS1}\dSP0^1$.
	
	There is a  fixed point $\bs{z}_\text{A}$ with $x_1 = \frac{\alpha_1 + \alpha_2}{\alpha_1 + \theta_1} \in (0,1)$ (from $h(x_1,n) = 0$) and $n= -\frac{b_1 x_1 + d_1}{a_1 x_1 + c_1}$ (from $g_1(x_1,n) = 0$). It holds that $n \in (0,1)$ if and only if $\frac{\alpha_2 - \theta_1}{\alpha_1+\alpha_2}\dSP0^1 < \dRT0^1$. The eigenvalues of $J$ at this point are given by
	\be\nonumber
		\lambda_1 = -(\dPS1^1 + \dSP0^1) n + \dSP0^2 < 0
	\ee
	and
	\be
		\ba
		\lambda_{2,3} &= \frac{1}{2}\left[ \frac{}{} (a_1n+b_1) x_1(1-x_1) \pm \right. \\
		&\left. \sqrt{(a_1n+b_1)^2 x_1^2(1-x_1)^2 - 4\epsilon K} \right]
		\ea\nonumber
	\ee
	where we denote
	\be\nonumber
		K = n(1-n)(\theta_1+\alpha_1)x_1(1-x_1)|\frac{\prt g_1}{\prt n}(x_1)| > 0.
	\ee
	The sign of the real parts of $\lambda_{2,3}$ is then determined by the sign of $a_1n + b_1 = \frac{-Y}{|\dgdn(x_1)|}$, which is negative if and only if $Y > 0$. This is satisfied precisely due to $ \dRT0^1 < \frac{\dTR1}{\dPS1}\dSP0^1$. Note that this condition is independent of the time-scale separation constant $\epsilon > 0$.
	
	\noindent \textbf{(2b)} ${\alpha_2 < \theta_1}$ and $\frac{\alpha_2 - \theta_1}{\alpha_1+\alpha_2}\dSP0^1 \geq \dRT0^1$.
	
	Along the bottom edge $n=0$, there is a fixed point $\bs{z}_\text{B}$ with $x_1 = \frac{\dSP0^1}{\dSP0^1 - \dRT0^1}$ (from $g_1(x_1,0) = 0$) that exists if and only if $\dRT0^1 < 0$. The eigenvalues of $J$ at this point are $\{x_1(1-x_1)b_1, \dSP0^2, \epsilon(x_1(\theta_1+\alpha_1) - (\alpha_1+\alpha_2))\}$. The first and second are negative, and the third is also negative if and only if $\frac{\alpha_2 - \theta_1}{\alpha_1+\alpha_2}\dSP0^1 > \dRT0^1$. Thus, it cannot be  stable for policies in $\mcal{V}$.
	
	There are four corner fixed points that always exist. The Jacobian evaluated at any one of the corners is a diagonal matrix. The only one that can be stable is $(1,0,0)$ if and only if $\alpha_2 > \theta_1$. There cannot be fixed points along the edges $x_1 = 0$ or $x_1 = 1$ whenever $\alpha_2 \neq \theta_1$.
	
	\noindent \textbf{(3)} $\alpha_2=\theta_1$. There is a line segment of fixed points corresponding to $x_1 = 1$ and $n \in [0,1]$. All possess a zero eigenvalue, where the Jacobian is singular along a center manifold spanned by the vector $[0,0,1]^\top$. The subset of points on this segment for which $n<\frac{\dRT0^1}{\dRT0^1 + \dTR1^1}$ have two negative eigenvalues. This subset is non-empty if and only if $\dRT0^1 > 0$. By standard arguments in center manifold theory (Ch. 2, \cite{perko2013differential}), this impilies the center manifold is locally attractive, i.e. for initial conditions close to the manifold converge to some point on the manifold. If $\dRT0^1 \leq 0$, then all fixed points on the line segment have positive eigenvalues and are thus unstable. The only fixed point that can be stable is $\bs{z}_\text{B}$ (second row of Table \ref{table:FPs}).
	
%	We now verify that all other fixed points of the form $(x_1,0,n)$ cannot be attractive. These fixed points must all lie on the boundary of the $(x_1,n)$ plane. We split into four disjoint sectors: 
%	\be
%		\ba
%		Q_1 &:= \{(x_1,n) : x_1 > \frac{\alpha_1+\alpha_2}{\alpha_1+\theta_1}, n < \frac{b_1 x_1 + d_1}{-(a_1x_1 + c_1)}\} \\
%		Q_2 &:= \{(x_1,n) : x_1 > \frac{\alpha_1+\alpha_2}{\alpha_1+\theta_1}, n \geq \frac{b_1 x_1 + d_1}{-(a_1x_1 + c_1)}\} \\
%		Q_3 &:= \{(x_1,n) : x_1 \leq \frac{\alpha_1+\alpha_2}{\alpha_1+\theta_1}, n \geq \frac{b_1 x_1 + d_1}{-(a_1x_1 + c_1)}\} \\
%		Q_4 &:= \{(x_1,n) : x_1 \leq \frac{\alpha_1+\alpha_2}{\alpha_1+\theta_1}, n < \frac{b_1 x_1 + d_1}{-(a_1x_1 + c_1)}\} \\
%		\ea
%	\ee
%	The vector field of the dynamics satisfy $\dot{x}_1, \dot{n} > 0$ in $Q_1$, $\dot{x}_1 \leq 0, \dot{n} > 0$ in $Q_2$, $\dot{x}_1 \leq 0, \dot{n} \leq 0$ in $Q_3$, and $\dot{x}_1 > 0, \dot{n} \leq 0$ in $Q_4$. The equalities hold only on the shared boundaries between the sectors. The unique interior fixed point lies at the single point of intersection of all four sectors. Moreover, we also observe that no point on the borders of $[0,1]^2$ can be attractive, since the vector field is oriented such that none of the four quadrants are positively invariant. This establishes the claim.
\end{proof}

\subsection{Analysis of optimal consumption}\label{sec:app_consumption}

The proof of Theorem \ref{thm:consume} relies on establishing certain properties about the utility function $U(\alpha_2)$. Let us define the \emph{support} of $U$ as
\be
	\text{supp } U := \{ \alpha_2 \geq 0 : U(\alpha_2) > 0 \}.
\ee
Throughout the proof, we will use the shorthand notations
\be
	\bar{\alpha}_1 := \frac{\alpha_1}{\alpha_1 + \theta_1}, \quad \bar{\alpha}_2 := \frac{\alpha_2}{\alpha_1 + \theta_1}
\ee
%\begin{lemma}
%	Suppose $\dRT0^1 \leq \frac{\dTR1^1}{\dPS1^1} \dSP0^1$. Then $\text{supp } U \subseteq [0,\theta_1]$ is an interval, and $U(\alpha_2)$ is concave on $\text{supp } U$. Moreover,
%		
%		\noindent (a)  If $C(\dSP0^1) \leq \dRT0^1 \leq \frac{\dTR1^1}{\dPS1^1} \dSP0^1$, then $U(\alpha_2)$ is strictly increasing on $\alpha_2 \in [0,\theta]$.
%		
%		\noindent (b)  If $-\frac{\theta_1}{\alpha_1} \dSP0^1 \leq \dRT0^1 \leq C(\dSP0^1)$, then $U(\alpha_2)$ attains its maximum at $\alpha_2^* \in (0,\theta)$, where
%		\be \label{eq:alphastar}
%			\alpha_2^* =(\alpha_1+\theta_1)|\dgdn(\bar{\alpha}_1)| a_1^\inv\left( 1 - \sqrt{\frac{T}{b_1 \dgdn(\bar{\alpha}_1) }} \right)
%		\ee
%\end{lemma}

\begin{proof}[Proof of Theorem \ref{thm:consume}]
	First, we notice that $U(\alpha_2) = 0$ for all $\alpha_2 > \theta_1$, so $U(\alpha_2) > 0$ only if $\alpha_2 \leq \theta_1$. The support of $U$ is the interval $[0,\theta_1]$ in the case that $\dRT0^1 > 0$, and is the interval $[0,\frac{\dSP0^1}{\dSP0^1 - \dRT0^1} \theta_1 +  \frac{\dRT0^1}{\dSP0^1 - \dRT0^1} \alpha_1]$ in the case that $\frac{- \theta_1}{\alpha_1} \dSP0^1 \leq \dRT0^1 \leq 0$. At any point in the support, the first derivative of $U$ is
	\be
		\ba
		U'(\alpha_2) &= R(\alpha_2) + \alpha_2 \frac{\prt R}{\prt \alpha_2} \\
		&= R(\alpha_2) - \alpha_2\frac{Y}{(\alpha_1+\theta_1)(\dgdn(\balph_1 + \balph_2))^2} \\
		&= \frac{b_1(\balph_1 + \balph_2) + d_1}{-(a_1 (\balph_1 + \balph_2)+ c_1)} - \balph_2 \frac{Y}{(a_1(\balph_1 + \balph_2)+ c_1)^2}
		\ea
	\ee
	We note that $Y >  0$ is equivalent to $\dRT0^1 < \frac{\dTR1^1}{\dPS1^1} \dSP0^1$. Evaluating the second derivative, we obtain
	\be
		\ba
		U''(\alpha_2) &= - 2\frac{Y}{(\alpha_1+\theta_1)(\dgdn(\balph_1 + \balph_2))^2} \\ 
		&\quad + \frac{\alpha_2 Y}{(\alpha_1+\theta_1)^2}\frac{(-2a_1)}{(\dgdn(\balph_1 + \balph_2))^3} \\
		&\propto -\left( 1 + \frac{\alpha_2}{\alpha_1 + \theta_1} \frac{a_1}{|\dgdn(\balph_1 + \balph_2)| } \right)
		\ea\nonumber
	\ee
	The sign of $U''$ then depends only on the factor in parentheses. This factor is monotonic in $\alpha_2 \in \text{supp } U$. Indeed, the sign of its derivative is the sign of ${-a_1^2 \balph_1 - a_1 c_1}$, which is independent of $\alpha_2$. If it is positive, then $U'' < 0 \ \forall \alpha_2 \in \text{supp } U$. If it is negative, we need to verify that $1 + (1 - \balph_1)\frac{a_1}{|\dgdn(1)| } > 0$. Indeed, this is equivalent to the inequality 
	\be
		(1-\balph_1) \frac{\dSP0^1 + \dPS1}{\dRT0^1 + \dTR1} > -\balph_1 
	\ee
	This is always satisfied because the LHS is positive due to Assumption \ref{assume:dgdn} ($\dRT0^1 > -\dTR1$), and the RHS is negative. This establishes concavity of $U$ in its interval of support.

	\noindent (a) In this sub-regime,  $\text{supp } U = [0,\theta_1]$. We need to establish that $U$ is strictly increasing. Since we know that $U'(0) = R(0) > 0$ and $U$ is concave, we just need to verify $U'(\theta_1) \geq 0$. We have
	\be
		\ba
			U'(\theta_1) &= R(\theta_1) - (1-\balph_1)  \frac{Y}{\dgdn(1)^2} \\ 
			&= \frac{b_1 + d_1}{-(a_1 + c_1)} - (1-\balph_1)  \frac{Y}{(a_1 + c_1)^2} 
		\ea\nonumber
	\ee
	Multiplying by $(a_1+c_1)^2$, we have $U'(\theta_1) \geq 0$ if and only if
		\be
			\ba
			&(\dRT0^1)^2 + ((1-\balph_1) \dPS1 + \dTR1) \dRT0^1 \\ 
			&\quad - (1-\balph_1) \dTR1 \dSP0^1 \geq 0
			\ea\nonumber
		\ee
		Observe the LHS above is quadratic in $\dRT0^1$, and has a negative and a positive root. Since we are considering $\dRT0^1 > 0$ in this regime, the condition above is satisfied if and only if $\dRT0^1$ is greater than the positive root, i.e.
		\be\nonumber
			\dRT0^1 \geq C(\dSP0^1)
		\ee

	\noindent (b) \underline{Case 1}: First we consider $0 < \dRT0^1 < C(\dSP0^1)$. This implies $\text{supp } U = [0,\theta_1]$ and $U'(\theta_1) < 0$. Therefore, $U(\alpha_2)$ attains its maximum for some $\alpha_2^* \in (0,\theta_1)$. To find $\alpha_2^*$, we need to solve $U'(\alpha_2) = 0$, which yields the quadratic equation
		\be
			\ba
			-a_1b_1 \bar{\alpha}_2^2 - 2b_1\dgdn(\bar{\alpha}_1) \bar{\alpha}_2 + g_1(\bar{\alpha}_1) \dgdn(\bar{\alpha}_1) = 0
			\ea
		\ee
		The ``minus" solution is given by
		\be
			\begin{aligned}
			\alpha_2^* &= \frac{\theta_1+\alpha_1}{-a_1b_1}\left[ b_1\dgdn(\bar{\alpha}_1)  -  \right. \\
			&\quad\left. \sqrt{ b_1^2\dgdn(\bar{\alpha}_1)^2 - a_1b_1 g_1(\bar{\alpha}_1,0)  \dgdn(\bar{\alpha}_1)   }\right]
			\end{aligned}\nonumber
		\ee
		which reduces to $\alpha_2^*$ in the Lemma statement. We deduce the ``minus" solution must correspond to the critical point by showing it is positive. The sign of $1 - \sqrt{\frac{Y}{b_1 \dgdn(\bar{\alpha}_1) }}$ coincides with the sign of $a_1$: 
		\be\nonumber
			1 - \sqrt{\frac{Y}{b_1 \dgdn(\bar{\alpha}_1) }} < 0 \iff a_1 (b_1 \bar{\alpha}_1 + d_1) < 0
		\ee
		Observe $b_1 \bar{\alpha}_1 + d_1 = g_1(\bar{\alpha}_1,0) > 0$ since $\dSP0^1, \dRT0^1 > 0$, and therefore $\alpha_2^*$ is positive. On the other hand, the ``plus" solution is negative since the sign of $1 +\sqrt{\frac{Y}{b_1 \dgdn(\bar{\alpha}_1) }}$ is the opposite of $a_1$.

		\noindent \underline{Case 2}: For  $\frac{- \theta_1}{\alpha_1} \dSP0^1 \leq \dRT0^1 \leq 0$, $\text{supp } U = [0,\frac{\dSP0^1}{\dSP0^1 - \dRT0^1} \theta_1 +  \frac{\dRT0^1}{\dSP0^1 - \dRT0^1} \alpha_1]$. We can express the utility as
		\be\nonumber
			U(\alpha_2) =
				\alpha_2 \cdot
				\begin{cases}
					-\frac{(\dRT0^1 - \dSP0^1)({\balph}_1 + {\balph}_2) + \dSP0^1}{a_1({\balph}_1 + {\balph}_2) + c_1}, &\text{if } \alpha_2 \in \text{supp } U \\
					0, &\text{else }
				\end{cases}
		\ee
		It holds that $U'(0) > 0$, $U'(\frac{\dSP0^1}{\dSP0^1 - \dRT0^1} \theta_1 +  \frac{\dRT0^1}{\dSP0^1 - \dRT0^1} \alpha_1) < 0$. Since $U$ is concave, it must attain its maximum in the interval $\alpha_2 \in (0, \frac{\dSP0^1}{\dSP0^1 - \dRT0^1} \theta_1 + \frac{\dRT0^1}{\dSP0^1 - \dRT0^1} \alpha_1 )$. The expression for this point is identical to $\alpha_2^*$ from Case 1 due to similar arguments.
\end{proof}

%\begin{proof}[Proof of Theorem \ref{thm:consume}]
%	We break the proof into the two items.
%	
%	\noindent (a) 
%	
%
%	\noindent (b) 
%\end{proof}

\end{document}